\title{Laminar Matroid Secretary: Greedy Strikes Back} %TODO Please add
\author{Zhiyi Huang}{Department of Computer Science, The University of Hong Kong, Hong Kong \and \url{https://i.cs.hku.hk/~zhiyi/} }{hzhiyi.tcs@gmail.com}{https://orcid.org/0000-0003-2963-9556}{}
\author{Zahra Parsaeian}{University of Freiburg, Germany}{zahrap@cs.uni-freiburg.de}{}{}
\author{Zixuan Zhu}{Department of Computer Science, The University of Hong Kong, Hong Kong}{zhuzixua@connect.hku.hk}{https://orcid.org/0009-0009-1099-468X}{}
\authorrunning{Z.~Huang, Z.~Parsaeian, and Z.~Zhu} %TODO mandatory. First: Use abbreviated first/middle names. Second (only in severe cases): Use first author plus 'et al.'
\keywords{Matroid Secretary, Greedy Algorithm, Laminar Matroid} %TODO mandatory; please add comma-separated list of keywords
\begin{document}

\maketitle

%TODO mandatory: add short abstract of the document
\begin{abstract}
    We show that a simple greedy algorithm is $4.75$-competitive for the Laminar Matroid Secretary Problem, improving the $3\sqrt{3} \approx 5.196$-competitive algorithm based on the forbidden sets technique (Soto, Turkieltaub, and Verdugo, 2018). 
\end{abstract}

\section{Introduction}
Decision-making under uncertainty has always been a central research topic in Computer Science and Operations Research. 
A classical problem, and one of the oldest in this area, is the Secretary Problem~\cite{dynkin1963optimal, Ferguson1989WhoST, Lindley1961DynamicPA}.
Here a decision-maker wants to hire the best secretary out of $n$ applicants.
The applicants are interviewed one at a time by a random order.
After each applicant's interview, the decision-maker learns how he/she ranks compared to the previous applicants, and must then immediately decide whether to hire him/her. 
The goal is to maximize the probability of hiring the best secretary.
This problem has been extensively studied in the literature and several optimal $e$-competitive algorithms have been proposed.
A well-known elegant solution by Dynkin~\cite{dynkin1963optimal} goes as follows:
reject the first $\frac{n}{e}$ applicants, and then hire the first subsequent applicant that is better than all previous applicants.

Babaioff, Immorlica, and Kleinberg~\cite{DBLP:conf/soda/BabaioffIK07} formulated a generalization called the Matroid Secretary Problem, which considers hiring $r \ge 1$ secretaries subject to a matroid constraint. 
From now on, we will refer to the secretaries as elements. We are given a matroid $\mathcal{M}=(E, \mathcal{I})$, where $E$ is a finite set of elements and $\mathcal{I} \subseteq 2^E$ is a collection of \emph{independent sets} of elements from $E$.
We will defer the definition of general matroids to the next section, and refer interested readers to the textbook by Oxley~\cite{oxley2006matroid} for an in-depth exposition of matroid theory. 
For now, readers may think of $\mathcal{I}$ as the collection of all subsets with size at most $r$, known as the $r$-uniform matroid, as a running example.
Each element $e \in E$ has a positive weight $w_e$.
Elements' weights are revealed to the algorithm one at a time by a random order.
After observing an element's weight, the algorithm must immediately decide whether to select the element.
The goal is to select an independent set that maximizes the sum of the weights of the elements in it.
An algorithm is $c$-competitive if the expected total weight of its selected elements is at least a $\frac{1}{c}$ fraction of the total weight of the optimal solution in hindsight.
Babaioff et al.~\cite{DBLP:conf/soda/BabaioffIK07} conjectured the existence of a constant competitive algorithm for the Matroid Secretary Problem, which has been a major open question in the literature for more than a decade.
The best competitive ratio so far is $\mathcal{O}(\log\log r)$, due to Lachish~\cite{DBLP:conf/focs/Lachish14} and Feldman, Svensson, and Zenklusen~\cite{DBLP:journals/mor/FeldmanSZ18}.

Although the general Matroid Secretary Problem remains elusive, much progress has been made toward designing constant competitive algorithms for special matroids.
This paper focuses on the laminar matroids.
A \emph{laminar family} $\mathcal{F}$ over elements $E$ is a collection of subsets of elements such that any pair of distinct subsets $A$ and $B$ are either related by containment or disjoint, i.e., either $A \subset B$, or $B \subset A$, or $A \cap B = \emptyset$.
A \emph{laminar matroid} puts a capacity $c(B)$ on each subset $B \in \mathcal{F}$;
a subset of elements is independent if it has at most $c(B)$ elements in each $B \in \mathcal{F}$. 
We call it the Laminar Matroid Secretary Problem.

Im and Wang~\cite{DBLP:conf/soda/ImW11} gave the first constant competitive algorithm for laminar matroids, with a competitive ratio of $\frac{16000}{3}$.
Jaillet, Soto, and Zenklusen~\cite{DBLP:conf/ipco/JailletSZ13} proposed an improved $3\sqrt{3}e \approx 14.12$-competitive algorithm by reducing it to the problem with partition matroids and running an $e$-competitive algorithm on the resulting problem.
Ma, Tang, and Wang~\cite{DBLP:journals/mst/Ma0W16} showed that directly running a greedy algorithm is $9.6$-competitive.
Finally, Soto, Turkieltaub, and Verdugo~\cite{DBLP:journals/mor/SotoTV21} introduced the forbidden sets technique and applied it to various matroids.
For laminar matroids, they combined it with the reduction to partition matroids by Jaillet et al.~\cite{DBLP:conf/ipco/JailletSZ13} to yield the best existing $3\sqrt{3} \approx 5.196$-competitive ratio.

This paper considers a simple greedy algorithm that is different from the previous greedy algorithm by Ma et al.~\cite{DBLP:journals/mst/Ma0W16}.
We show that our algorithm is $4.75$-competitive for the Laminar Matroid Secretary Problem, improving upon the best existing result by Soto et al.~\cite{DBLP:journals/mor/SotoTV21}.

\section{Preliminaries}

We will assume that the elements' weights are distinct, which simplifies the arguments by avoiding tie-breaking cases.
This is without loss of generality as we can randomly perturb the weights with negligible Gaussian noise.

For notational simplicity, we write $S + e$ for $S \cup \{ e \}$ and $S - e$ for $S \setminus \{ e \}$.

\begin{definition}
A matroid $\mathcal{M}=(E, \mathcal{I})$ consists of a set of elements $E$ and collection of subsets of elements $\mathcal{I}$ called the independent sets, with the following properties:
\begin{itemize}
\item \textbf{Non-empty Property:~}
$\varnothing \in \mathcal{I}$;
\item \textbf{Hereditary Property:~}
For any $S \subset T \subseteq E$, if $T \in \mathcal{I}$ then $S \in \mathcal{I}$;
\item \textbf{Augmentation Property:~}
For any $S, T \in \mathcal{I}$ such that $|S| > |T|$, there exists an element $e \in S \setminus T$ such that $T + e \in \mathcal{I}$.
\end{itemize}
\end{definition}

Another useful property of matroids, even considered to be the fourth axiom by some, is the following \textbf{Exchange Property}.

\begin{lemma}[c.f., Schrijver~\cite{Schrijver}, Corollary 39.12a]
    For any $S, T \in \mathcal{I}$ such that $|S| = |T|$, for any $e \in S \setminus T$, there is an element $e' \in T \setminus S$ such that $S - e + e', T - e' + e \in \mathcal{I}$.
\end{lemma}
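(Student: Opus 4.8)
The plan is to give the standard fundamental-circuit argument. Fix $e \in S \setminus T$. Since $|S - e| = |S| - 1 < |T|$, the Augmentation Property already yields \emph{some} $e' \in T \setminus (S - e)$ with $S - e + e' \in \mathcal{I}$, and because $e \notin T$ this $e'$ automatically lies in $T \setminus S$; so obtaining a candidate for the first requirement is free, and the whole difficulty is to produce a \emph{single} $e'$ that \emph{simultaneously} makes $T - e' + e$ independent. The strategy for controlling this second requirement is to look at the set $T + e$ and split on whether it is independent.

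First I would dispose of the easy case $T + e \in \mathcal{I}$. Apply the Augmentation Property to $S - e$ and $T$ to get $e' \in T \setminus (S - e) = T \setminus S$ with $S - e + e' \in \mathcal{I}$; since $T - e' + e \subseteq T + e \in \mathcal{I}$, the Hereditary Property gives $T - e' + e \in \mathcal{I}$, and we are done.

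The main case is $T + e \notin \mathcal{I}$. Then $T + e$ contains a circuit (an inclusion-minimal dependent set), and since $T$ is independent this circuit $C$ must contain $e$; moreover $C$ is the \emph{unique} circuit of $T + e$ (the standard fundamental-circuit fact), so $T + e - g \in \mathcal{I}$ for every $g \in C - e$. It therefore suffices to exhibit $g \in C - e$ with $S - e + g \in \mathcal{I}$, i.e. $g \notin \mathrm{cl}(S - e)$, where $\mathrm{cl}$ is the closure operator. If no such $g$ existed we would have $C - e \subseteq \mathrm{cl}(S - e)$, hence $e \in \mathrm{cl}(C - e) \subseteq \mathrm{cl}(\mathrm{cl}(S - e)) = \mathrm{cl}(S - e)$ (using that $C$ is a circuit, so $C - e$ spans $e$, together with monotonicity and idempotence of $\mathrm{cl}$); but $S = (S - e) + e \in \mathcal{I}$ forces $e \notin \mathrm{cl}(S - e)$, a contradiction. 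So some $g \in C - e$ satisfies $S - e + g \in \mathcal{I}$; note $g \neq e$, and $g \notin S$ (otherwise $g \in S - e \subseteq \mathrm{cl}(S-e)$), while $g \in C - e \subseteq T$, so $g \in T \setminus S$, and we take $e' = g$.

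The only ingredients here that are not literally among the three stated axioms are the uniqueness of the fundamental circuit (equivalently, the circuit-elimination axiom) and the elementary properties of the closure operator; I would either recall these as standard consequences of the axioms or simply cite Schrijver, as the paper does. The conceptual crux, rather than any computation, is the realization that one should not choose $e'$ to satisfy the first condition and hope, but instead begin from the circuit $C \subseteq T + e$ — which pins down exactly the elements $e'$ keeping $T - e' + e$ independent — and only afterwards check, via the span of $C - e$, that $C - e$ cannot be entirely absorbed into $\mathrm{cl}(S - e)$.
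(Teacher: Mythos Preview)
Your argument is correct: the split on whether $T+e$ is independent, followed in the dependent case by the fundamental-circuit and closure argument showing $C-e \not\subseteq \mathrm{cl}(S-e)$, is exactly the classical proof of the symmetric exchange property. The paper, however, does not supply its own proof of this lemma at all --- it simply records the statement and cites Schrijver, Corollary~39.12a. So there is nothing in the paper to compare against, and you have in fact provided strictly more than the paper does here.
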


Let $OPT$ denote the independent set that maximizes the sum of its elements' weights, which is unique when all the elements have distinct weights.
We will use a well-known property of matroids, whose proof we include for self-containedness.

\begin{lemma}
    \label{lem:basic-matroid}
    For any matroid $\mathcal{M}=(E, \mathcal{I})$, and any subset of elements $S \subseteq E$, the elements in $OPT \cap S$ also belong to the optimal solution of the sub-matroid restricted to the elements in $S$, i.e., $\mathcal{M}_S=(S, \mathcal{I} \cap 2^S)$.
\end{lemma}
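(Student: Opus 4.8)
The plan is to prove directly that $OPT \cap S \subseteq OPT_S$, where $OPT_S$ denotes the optimal solution of the restricted matroid $\mathcal{M}_S$. I fix an arbitrary $e \in OPT \cap S$ and suppose, toward a contradiction, that $e \notin OPT_S$. Two easy consequences of the axioms will be used throughout: since all weights are positive, every maximum-weight independent set is an inclusion-wise maximal independent set (adding any element would strictly increase the weight), so both $OPT$ and $OPT_S$ are bases of $\mathcal{M}$ and $\mathcal{M}_S$ respectively; and all bases of a matroid have the same cardinality, which follows from the Augmentation Property.

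The crux is an extension step: I claim $OPT_S$ can be enlarged to a basis $B$ of the full matroid $\mathcal{M}$ using only elements of $E \setminus S$, i.e., $OPT_S \subseteq B$ and $B \setminus OPT_S \subseteq E \setminus S$. Indeed, starting from $OPT_S$ and repeatedly invoking the Augmentation Property (with respect to any fixed basis of $\mathcal{M}$) to enlarge the current independent set until it reaches full rank, each element $x$ we are handed must lie outside $S$: otherwise, since the current set already contains $OPT_S$, the Hereditary Property would give $OPT_S + x \in \mathcal{I} \cap 2^S$, contradicting the maximality of $OPT_S$ in $\mathcal{M}_S$. Hence $B \cap S = OPT_S$, and since $e \in S \setminus OPT_S$ we conclude $e \notin B$.

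Now apply the Exchange Property to the equal-size bases $OPT$ and $B$ of $\mathcal{M}$ with the element $e \in OPT \setminus B$: there is $e' \in B \setminus OPT$ with $OPT - e + e' \in \mathcal{I}$ and $B - e' + e \in \mathcal{I}$. Applying the optimality of $OPT$ to $OPT - e + e'$ forces $w_{e'} < w_e$ (weights being distinct). Finally, split on whether $e' \in OPT_S$. If $e' \notin OPT_S$, then $OPT_S + e \subseteq B - e' + e \in \mathcal{I}$, so by the Hereditary Property $OPT_S + e$ is an independent set of $\mathcal{M}_S$ strictly containing $OPT_S$, contradicting maximality. If $e' \in OPT_S$, then $OPT_S - e' + e \subseteq B - e' + e \in \mathcal{I}$ is an independent set of $\mathcal{M}_S$ of weight $w(OPT_S) - w_{e'} + w_e > w(OPT_S)$, contradicting the optimality of $OPT_S$. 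Either way we reach a contradiction, so $e \in OPT_S$.

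The only genuine obstacle is the extension step: one has to make sure that enlarging $OPT_S$ to a basis of the whole matroid never ``re-enters'' $S$, which is precisely where the maximality of $OPT_S$ inside $\mathcal{M}_S$ interacts with the Hereditary Property. Once that is in place, the exchange argument and the two-way case split are routine book-keeping, driven by the fact that the partner $e'$ is simultaneously removable from $B$ (hence lighter than $e$) and either present in or absent from $OPT_S$.
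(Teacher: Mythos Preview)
Your proof is correct and follows the same contradiction-via-exchange skeleton as the paper: assume $e \in OPT \cap S$ misses $OPT_S$, enlarge $OPT_S$ to match the rank of $OPT$, apply the Exchange Property, and compare weights. The one point of divergence is the enlargement. The paper augments $OPT(S)$ with elements of $OPT$ itself; since those augmenting elements lie in $OPT$, the exchange partner $e'$ is forced into $OPT(S) \setminus OPT$ immediately, and a single weight comparison finishes. You instead extend $OPT_S$ to a basis $B$ of $\mathcal{M}$ using only elements of $E \setminus S$ (so $B \cap S = OPT_S$), which is an equally natural move but leaves open whether the exchange partner $e'$ sits in $OPT_S$ or in $B \setminus S$, hence your two-case finish. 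Both arguments use exactly the same tools (maximality of $OPT_S$ in $\mathcal{M}_S$ plus hereditary to control the augmenting elements, then exchange and optimality), and neither is meaningfully shorter or more general than the other.
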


\begin{proof}
    Let $OPT(S)$ denote the optimal solution of sub-matroid $\mathcal{M}_S$.
    Suppose to the contrary that there exists an element $e \in OPT \cap S$ such that $e \not \in OPT(S)$.

    We first show that there is an element $e' \in OPT(S)$ such that:
    $$
        OPT - e + e' ~,~ OPT(S) - e' + e ~\in~ \mathcal{I}
        ~.
    $$
    
    If $|OPT(S)| = |OPT|$, it follows by the exchange property.
    
    Next suppose that $|OPT(S)| < |OPT|$.
    By repeatedly applying the augmentation property, we can find elements $e_1, e_2, \dots, e_k \in OPT \setminus OPT(S)$, where $k = |OPT| - |OPT(S)|$, such that $OPT(S) + e_1 + \dots + e_k \in \mathcal{I}$.
    Further, the optimality of $OPT(S)$ and the fact that $e \not \in OPT(S)$ indicate $e_1, \dots, e_k \ne e$.
    We next apply the exchange property to $OPT(S) + e_1 + \dots + e_k$ and $OPT$ to conclude that there is an element $e' \in (OPT(S) + e_1 + \dots + e_k) \setminus OPT = OPT(S) \setminus OPT$ such that $OPT - e + e' \in \mathcal{I}$ and $OPT(S) + e_1 + \dots + e_k - e' + e \in \mathcal{I}$.
    Finally, the latter implies $OPT(S) - e' + e \in \mathcal{I}$ by the hereditary property.

    Given such an element $e'$, we can now derive a contradiction from the optimality of $OPT$ and $OPT(S)$:
    the optimality of $OPT$ implies $w_e > w_{e'}$, while the optimality of $OPT(S)$ indicates $w_{e'} > w_e$ (recall that $e \not \in S$).
\end{proof}

\section{Greedy Algorithm}
\label{sec:algorithm}

This section introduces our greedy algorithm.
Instead of letting the elements arrive one by one in $n$ discrete time steps, we consider an equivalent continuous time model, in which each element's arrival time is drawn independently and uniformly between $0$ and $1$.%
\footnote{To simulate the continuous time model in the discrete-time model, we can first draw $n$ arrival times $t_1 < t_2 < \dots < t_n$ between $0$ and $1$, and let $t_i$ be the arrival time of the $i$-th element.}
The continuous time model implies independence of different elements' arrival times, which leads to a simpler analysis.

We further define the following notations:
\begin{itemize}
    \item Let $E(t)$ denote the set of elements that arrive from time $0$ to $t$ (inclusive).
    \item Let $OPT(t)$ be the optimal solution with respect to elements in $E(t)$.
    \item Let $OPT = OPT(1)$ denote the optimal solution with respect to all elements.
    %\item Let $ALG(t)$ be the set of elements selected by the algorithm from time $0$ to $t$ (inclusive).
\end{itemize}

For some time threshold $t_0$ to be determined, our greedy algorithm rejects all elements that arrive before $t_0$, like the classical algorithm by Dynkin~\cite{dynkin1963optimal}.
For an element that arrives at time $t_0 < t \le 1$, the algorithm selects it under two conditions:
it belongs to the offline optimal solution with respect to the arrived elements in $E(t)$;
and adding it to the selected elements gives an independent set.

\begin{algorithm}[hbt!]
\SetKwInOut{Input}{input}\SetKwInOut{Output}{output}
\caption{Greedy}\label{alg:greedy}
\Input{~matroid $\mathcal{M}=(E, \mathcal{I})$ and a threshold time $t_0$}
\Output{~set of selected elements $ALG$, initialially set to be the empty set $\varnothing$.}
%$ALG \gets \varnothing$\;
\For{each element $e$ that arrives at time $0 \leq t \leq 1$}
{\textbf{let} $ALG \gets ALG \cup e$ \textbf{if}
    (1) $t > t_0$,
    (2) $e \in OPT(t)$, and
    (3) $ALG \cup e \in \mathcal{I}$.
}
\end{algorithm}

\cref{alg:greedy} is computationally efficient.
In particular, $OPT(t)$ at any time $t$ can be computed in polynomial time by processing the elements in $E(t)$ by descending order of weights, and selecting each element whenever independence is still satisfied.

We next compare \cref{alg:greedy} with the $9.6$-competitive greedy algorithm by Ma et al.~\cite{DBLP:journals/mst/Ma0W16}.
The only difference is the second condition for selecting an element $e$.
Their algorithm checks if $e$ is in the optimal solution with respect to $E(t_0) \cup \{e\}$, i.e., the elements that arrive before the threshold time $t_0$ and element $e$.
By contrast, our algorithm examines $OPT(t)$, i.e., the optimal solution with respect to all arrived elements.
By \cref{lem:basic-matroid}, elements in the optimal solution $OPT$ would pass both criteria regardless of their arrival time.
Ours is stricter, and hence, is less likely to select elements outside $OPT$.

This natural idea of checking local optimality was applied in Babaioff et al's \cite{DBLP:journals/jacm/BabaioffIKK18} $e$-competitive \textit{virtual algorithm} for uniform matroids, which maintains a reference set at each time $t$ to record the current top-$k$ elements and selects an element if it belongs to the reference set on arrival.
Same idea also appeared in the $e$-competitive online algorithm for weighted bipartite matching given by Kesselheim et al \cite{DBLP:conf/esa/KesselheimRTV13}.

\section{Competitive Analysis}
\label{sec:analysis}

\begin{theorem}\label{thm:main}
    \cref{alg:greedy} (with $t_0 = 0.7$) is $4.75$ probability-competitive for the Laminar Matroid Secretary Problem.
\end{theorem}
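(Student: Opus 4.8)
By the definition of probability-competitiveness it suffices to show that \cref{alg:greedy} with $t_0 = 0.7$ selects every element $e \in OPT$ with probability at least $\frac{1}{4.75}$, so fix $e \in OPT$. The starting point is \cref{lem:basic-matroid}: as soon as $e$ has arrived it belongs to the current running optimum $OPT(t)$, so selection condition~(2) never obstructs $e$. Hence $e$ is picked exactly when it arrives at some time $t > t_0$ and, at that instant, $ALG + e$ is still independent. Writing $A$ for the (random) set of already-selected elements just before $e$ arrives and conditioning on the arrival time $t \in (t_0,1]$ of $e$,
\[
\Pr[\,e \in ALG\,] \;=\; \int_{t_0}^{1} \Pr\big[\,A + e \in \mathcal{I} \;\big|\; e \text{ arrives at } t\,\big]\, dt ,
\]
so the task reduces to lower bounding the integrand.

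I would then describe the failure event $A + e \notin \mathcal I$ through the laminar structure: it occurs precisely when some set $B$ of the laminar family with $e \in B$ has $|A \cap B| = c(B)$. Since $OPT$ is independent and $e \in OPT \cap B$, at most $c(B)-1$ elements of $OPT$ other than $e$ lie inside $B$, and $e \notin A$; hence any such blocking set $B$ must contain an element of $A \setminus OPT$. This observation is the crux, and it also explains the improvement over the $9.6$-competitive greedy of Ma et al.: testing membership in the running optimum $OPT(t)$ is strictly more restrictive than testing membership in the optimum of $E(t_0)+e$, so only very few elements outside $OPT$ are ever selected. I would make this quantitative by showing that any $f \in A \setminus OPT$ is selected only because a heavier element of $OPT$ that displaces $f$ has not yet arrived: $f \in OPT(s_f) \setminus OPT$ for its arrival time $s_f$, and combining \cref{lem:basic-matroid} with the exchange property, such an $f$ can be matched to a strictly heavier element of $OPT$ that arrives after $s_f$; in a laminar matroid this matching can be taken to respect the laminar family, so the displacing element also shares a laminar set with $e$.

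With this in hand, blocking $e$ via a laminar set $B$ of capacity $k := c(B)$ requires that, before $e$ arrives, at least $k$ of the \emph{relevant} elements of $B$ --- those in $OPT \cap B$ together with their strictly later, strictly heavier displacers --- get selected, and a selected element of $B$ must in particular have been a running record among the relevant elements of $B$ after time $t_0$, just as in Dynkin's analysis. Thus $e$ survives unless $k$ such records accumulate in $(t_0,t)$. Conditioning on the unordered arrived set and on the relative weight order of the few elements of $B$ that matter for $e$, the integrand becomes, for each capacity $k$ and each length of the chain of laminar sets around $e$, a one-dimensional integral; for $k=1$ it is $\frac{t_0}{t}$, giving $\int_{t_0}^{1}\frac{t_0}{t}\,dt = t_0\ln\frac{1}{t_0}\approx 0.25$ at $t_0 = 0.7$, while the worst case over $k$ and over the chain structure, optimised over the threshold, pins the bound at $t_0 = 0.7$ and survival probability $\frac{1}{4.75}$.

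The step I expect to be the main obstacle is converting the nested family of laminar constraints around $e$ into a tractable probabilistic model. A union bound over the whole chain of laminar sets containing $e$ is far too lossy to reach $4.75$; the argument must exploit nestedness, so that essentially only the innermost binding constraint, inflated by a controlled amount of slack from temporarily-optimal elements, can be responsible for blocking $e$, and it must track the joint arrival times of the handful of relevant elements rather than each constraint separately. Getting the constant right in this coupled computation, and verifying that $t_0 = 0.7$ is indeed the optimal threshold, is where the real work lies.
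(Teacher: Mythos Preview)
Your setup through the first displayed integral is fine and matches the paper, but from there the proposal heads in the wrong direction on the central point. You explicitly reject a union bound over the chain $B_1 \subset B_2 \subset \dots \subset B_k$ of laminar sets containing $e$ as ``far too lossy to reach $4.75$,'' and instead propose to exploit nestedness via a coupled analysis of displacers. In fact the paper's proof \emph{is} a union bound over this chain, and it is exactly tight enough to give $4.75$. What makes it work is two ingredients you are missing. First, one may assume $c(B_1) < c(B_2) < \dots < c(B_k)$ without loss of generality (a nested pair with equal capacities is redundant), so the union bound has at most one term per capacity value. Second, the right relaxation of ``$B_i$ is full before time $t$'' is not through $A\setminus OPT$ and displacers at all: one relaxes ``selected and in $B_i$'' to ``in the optimum $OPT_i(s)$ of the sub-matroid on $B_i\cap E(s)$ at its own arrival time $s$'' (call such an element \emph{qualified} for $B_i$). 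By \cref{lem:basic-matroid} every selected element of $B_i$ is qualified, so it suffices to bound the probability that $c(B_i)$ qualified elements of $B_i$ arrive in $(t_0,t)$.

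The payoff of this relaxation is a clean distributional statement that your displacer-matching route does not yield: the last qualified element before time $t$ is precisely the last-arriving member of $OPT_i(t^-)$, a set of size $c(B_i)$ whose members arrive uniformly on $(0,t)$, so $\ln(t/t_{-1})$ is exponential with rate $c(B_i)$; iterating, $\ln(t/t_{-c(B_i)})$ is a sum of $c(B_i)$ i.i.d.\ such exponentials, hence Gamma with shape and rate both $c(B_i)$. The failure probability for $B_i$ is therefore $F\big(\ln\tfrac{t}{t_0};\,c(B_i),c(B_i)\big)$, and by the distinct-capacity observation the union bound gives
\[
\Pr[e\in ALG] \;\ge\; \int_{t_0}^{1}\Big(1-\sum_{i\ge 1} F\big(\ln\tfrac{t}{t_0};\,i,i\big)\Big)\,dt,
\]
which evaluates numerically to more than $\tfrac{1}{4.75}$ at $t_0=0.7$. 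So the ``main obstacle'' you anticipate --- avoiding the union bound --- is not an obstacle; the actual work is identifying the qualified-element relaxation and the Gamma law it produces, neither of which your $A\setminus OPT$/displacer outline reaches.
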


\begin{proof}
Consider any element $e^{*} \in OPT$.
It suffices to prove that \cref{alg:greedy} selects $e^{*}$ with probability at least $\frac{1}{4.75}$.

Since $e^{*}$ would satisfy the second condition regardless of its arrival time, i.e., $e^{*} \in OPT(t)$ (\cref{lem:basic-matroid}), we consider the case when it arrives at time $t_0 < t^* \le 1$, and lower bound the probability that all subsets containing $e^{*}$ in the laminar family, denoted as $B_1 \subset B_2 \subset \dots \subset B_k$, still have unused capacities.
We can assume without loss of generality that $c(B_i) < c(B_{i+1})$.
Otherwise, we can remove $B_i$ and its capacity and still have the same laminar matroid.

We first bound the failure probability for each subset $B_i$ separately.
That is, we consider the probability that the algorithm selects $c(B_i)$ elements from time $t_0$ to $t^*$.
To avoid the complex dependence across different subsets, we will analyze a relaxed event.
Consider the sub-laminar matroid restricted to elements in $B_i$, denoted as $\mathcal{M}_{B_i}$.
We say that an element $e \in B_i$ arriving at $0 \le t \le 1$ \textit{qualifies} (with respect to $B_i$) if it belongs to the optimal solution with respect to sub-matroid $\mathcal{M}_{B_i \cap E(t)}$, denoted as $OPT_i(t)$.
By \cref{lem:basic-matroid} with matroid $\mathcal{M}_{E(t)}$ and subset $B_i \cap E(t)$, any element $e \in B_i$ selected by the algorithm, in particular, any $e \in OPT(t) \cap B_i$, belongs to $OPT_i(t)$ and thus qualifies.
Hence, it suffices to upper bound the probability that at least $c(B_i)$ elements in $B_i$ qualify from time $t_0$ to $t$.

To do so, we next establish two lemmas that characterize the arrival times of qualified elements.
For ease of presentation, we will assume without loss of generality that $|OPT_i(t)| = c(B_i)$ at all time, by adding infinitely many dummy elements with negligible weights.

\begin{lemma}\label{lem:last-qualified}
Fix any time $0 < t \le 1$ and subset $B_i$ in the laminar family. 
Let $t_{-1}$ be the arrival time of the last qualified element of $B_i$ from time $0$ to $t$ (exclusive).
Then, $\ln \frac{t}{t_{-1}}$ follows the exponential distribution with rate parameter $c(B_i)$, i.e.,
$\Pr \big[ \ln \frac{t}{t_{-1}} \le x \big] = 1 - e^{-c(B_i)x}$.
\end{lemma}

\begin{proof}[Proof of \cref{lem:last-qualified}]
    We first rewrite the cumulative distribution function:
    $$
        \Pr \big[ \ln \tfrac{t}{t_{-1}} \le x \big] = 1- \Pr \big[ \ln \tfrac{t}{t_{-1}} > x \big] = 1- \Pr \big[ t_{-1} < e^{-x} t \big]
        ~.
    $$

    It remains to show that:
    $$
         \Pr \big[ t_{-1} < e^{-x} t \big] = e^{-c(B_i)x}
        ~.
    $$

    Let $E(t^-)$ be the elements that arrive from time $0$ to $t$ (exclusive).
    Accordingly, let $OPT_i(t^-)$ be the optimal solution with respect to sub-matroid $\mathcal{M}_{E(t^-) \cap B_i}$.
    We will prove that $t_{-1}$ is the arrival time of the last element in $OPT_i(t^-)$.
    In other words, the last element in $OPT_i(t^-)$ is the last qualified element.
    
    On the one hand, by \cref{lem:basic-matroid} with matroid $\mathcal{M}_{E(t^-) \cap B_i}$, any element in $OPT_i(t^-)$ must qualify.
    On the other hand, consider any element $e'$ that arrives at some time $t'$ after all elements in $OPT_i(t^-)$ arrive.
    This means $OPT_i(t^-) \subseteq E(t')$.
    Further by \cref{lem:basic-matroid} with matroid $\mathcal{M}_{E(t^-) \cap B_i}$, we get that $OPT_i(t^-) = OPT_i(t^-) \cap E(t') \subseteq OPT_i(t')$. 
    This implies $OPT_i(t^-) = OPT_i(t')$ and thus $e' \notin OPT_i(t')$.
    That is, $e'$ cannot qualify.

    The lemma now follows because the probability that all $c(B_i)$ elements in $OPT_i(t^-)$ arrive before $e^{-x} t$ equals $e^{-c(B_i)x}$, where these elements' arrival times distribute independently and uniformly between time $0$ and $t$ (exclusive).
\end{proof}

\begin{lemma}\label{lem:gamma}
    Fix any time $t$ and subset $B_i$ in the laminar family. Let $t_{-k}$ be the arrival time of the $k$-th last qualified element of $B_i$ from time $0$ and $t$ (exclusive).
    Then, $\ln \frac{t}{t_{-c(B_i)}}$ follows the Gamma distribution with shape and rate parameters equal to $c(B_i)$.
\end{lemma}

To be self-contained, we include below the cumulative function of Gamma distribution with shape parameter $\alpha$ and rate parameter $\beta$:
    $$
        F(x; \alpha, \beta) = \int_0^x \frac{y^{\alpha-1} e^{-\beta y} \beta^\alpha}{(\alpha-1)!} \,\textrm{d}y
        ~.
    $$
We refer interested readers to Chapter $8.4$ of Blitzstein and Hwang~\cite{blitzstein2019introduction} for further information about Gamma distributions.

\begin{proof}[Proof of \cref{lem:gamma}]
Observe that 
    $$
        \ln \frac{t}{t_{-c(B)}} = \ln \frac{t}{t_{-1}} + \ln \frac{t_{-1}}{t_{-2}} + \cdots + \ln \frac{t_{-c(B_i)+1}}{t_{-c(B_i)}}
        ~.
    $$

By \cref{lem:last-qualified}, $\ln \frac{t}{t_{-1}}$ follows the exponential distribution with rate parameter $c(B_i)$.
Further note that conditioned on $t_{-1}$ and the subset of elements that arrive before $t_{-1}$, these elements independently and uniformly arrive from $0$ to $t_{-1}$ (exclusive).
Hence, by \cref{lem:last-qualified} with $t = t_{-1}$, we get that $\ln \frac{t_{-1}}{t_{-2}}$ follows the exponential distribution with rate parameter $c(B_i)$, independent to $\ln \frac{t}{t_{-1}}$.
Repeating this argument shows that $\ln \frac{t}{t_{-c(B_i)}}$ is the sum of $c(B_i)$ independent exponential random variables with rate parameter $c(B_i)$, and thus follows the Gamma distribution with shape and rate parameters equal to $c(B_i)$.%
\footnote{The fact that the sum of exponential distributions with the same parameter forms a Gamma distribution can be found, e.g., in Theorem 8.4.3 of Blitzstein and Hwang~\cite{blitzstein2019introduction}. }
\end{proof}

\bigskip

We are now ready to upper bound the probability that at least $c(B_i)$ elements in $B_i$ qualify from time $t_0$ to $t^*$, because $B_i$ has at least $c(B_i)$ qualified elements arrive from time $t_0$ to $t^*$ if and only if $t_{-c(B_i)} \ge t_0$. 
As a corollary of \cref{lem:gamma}, the probability of both events equals $F(\ln \frac{t}{t_0}; c(B_i), c(B_i))$. 
Further by union bound, and by that $c(B_1) < c(B_2) < \dots < c(B_k)$, with probability at least $1 - \sum_{i=1}^\infty F \big( \ln\frac{t^*}{t_0}, i, i \big)$ we have that less than $c(B_i)$ qualified elements arrive from time $t_0$ to $t$ for any $B_i$.
Hence, \cref{alg:greedy} selects element $e^{*}$ with probability at least:
    $$
        \int_{t_0}^1  ~ 1 - \sum_{i=1}^\infty F \Big( \ln\frac{t^*}{t_0}, i, i \Big)  \,\textrm{d} t^*
        ~.
    $$

The theorem follows as one can verify numerically that it is at least $\frac{1}{4.75}$ when $t_0 = 0.7$.

For completeness, we next present the details of the numerical verification.
We divide the infinite sum of cumulative distribution functions into two parts:
    $$
        \sum_{i=1}^{3000} F \Big( \ln\frac{t^*}{t_0}, i, i \Big) 
    \quad\mbox{and}\quad
    \sum_{i=3001}^\infty F \Big( \ln\frac{t^*}{t_0}, i, i \Big)
        ~.
    $$

The first part is finite and hence can be calculated numerically.%
\footnote{The numerical calculations are conducted in Python. Interested readers may find the source codes through: https://github.com/ZixuannnZhu/Laminar-Matroid-Secretary. \label{git_link}}
We next upper bound the second part.
Recall that the cumulative distribution function of Gamma distribution is:
    $$
        F(x,i,i)=\int_{0}^{x} \frac{y^{i-1} e^{-iy} i^i}{(i-1)!} \textrm{d} y = \int_{0}^{x^i} \frac{i^i}{i!} e^{-iy}\textrm{d} y^i
        ~.
    $$

By $e^{-iy} \le 1$, this integral is bounded from above by $\frac{x^i i^i}{i!}$, which is further upper bound by $(xe)^i$ because $i! \ge i^i e^{-i}$ by Stirling's approximation.
Hence:
    $$
        \sum_{i=3001}^\infty F \Big( \ln\frac{t^*}{t_0}, i, i \Big) < \sum_{i=3001}^\infty \Big(e \ln\frac{t^*}{t_0} \Big)^{i} 
        ~,
    $$
which is a geometric sequence. 
The convergence condition requires that $e \ln \frac{t^*}{t_0} < 1$, where $t^* \in [0,1]$. 
This restricts our choice of $t_0$ to be greater than $e^{-1/e} \approx 0.692$. 
On the other hand, delaying the threshold time to a very late stage will lead to significant loss by missing all the optimal elements arriving before $t_0$. Numerical calculations suggest choosing $t_0=0.7$ to utilize the success probability,\textsuperscript{\ref{git_link}} with which we have:
    $$
        \sum_{i=3001}^\infty \Big(e \ln\frac{t^*}{t_0} \Big)^{i} < \sum_{i=3001}^\infty \Big(e \ln\frac{1}{0.7} \Big)^{i} < 10^{-38}
        ~.
    $$

%where the last inequality follows from the convergence of geometric sequences and $\ln \frac{t^*}{t_0} \le \ln \frac{1}{0.7} < \frac{1}{e}$. 

Hence, the probability that \cref{alg:greedy} selects element $e^{*}$ is at least:
    $$
        \int_{0.7}^1  1 - \sum_{i=1}^\infty F \Big( \ln\frac{t^*}{0.7}, i, i \Big) \textrm{d} t^* > \int_{0.7}^1  1 - \sum_{i=1}^{3000} F \Big( \ln\frac{t^*}{0.7}, i, i \Big) - 10^{-38} \textrm{d} t^* > \frac{1}{4.75}
        ~.
    $$
\end{proof}

\section{Conclusion}

We conclude the article with a few remarks.
First, our algorithm is $4.75$ probability competitive, i.e., any element from the optimal solution is selected with probability at least $\frac{1}{4.75}$.
This is slightly stronger than the standard notion of competitive analysis, a.k.a., utility competitiveness in the context of Matroid Secretary Problem.

Further, our algorithm works in the ordinal model posed by Soto et al.~\cite{DBLP:journals/mor/SotoTV21}, where the decision-maker does not have numerical information about elements' weights, but only knows their relative ranks.

Finally, some readers may notice that our analysis has underestimated the algorithm's probability of selecting an element from the optimal solution, because the union bound sums over all possible capacities from $1$ to infinity, instead of, say, to the rank of the matroid.
Nevertheless, the ratio does not change significantly even if we only sum to $10$. 
The competitive ratio given by our method for different ranks, with $t_0$ optimized for each rank, is shown in the following figure.\textsuperscript{\ref{git_link}}
When rank $=1$, \cref{alg:greedy} is $e$ probability competitive by taking $t_0 = \frac{1}{e}$, matching the results of Dynkin~\cite{dynkin1963optimal} for the original Secretary Problem.

\clearpage

\begin{figure}[h]
\centering
\includegraphics{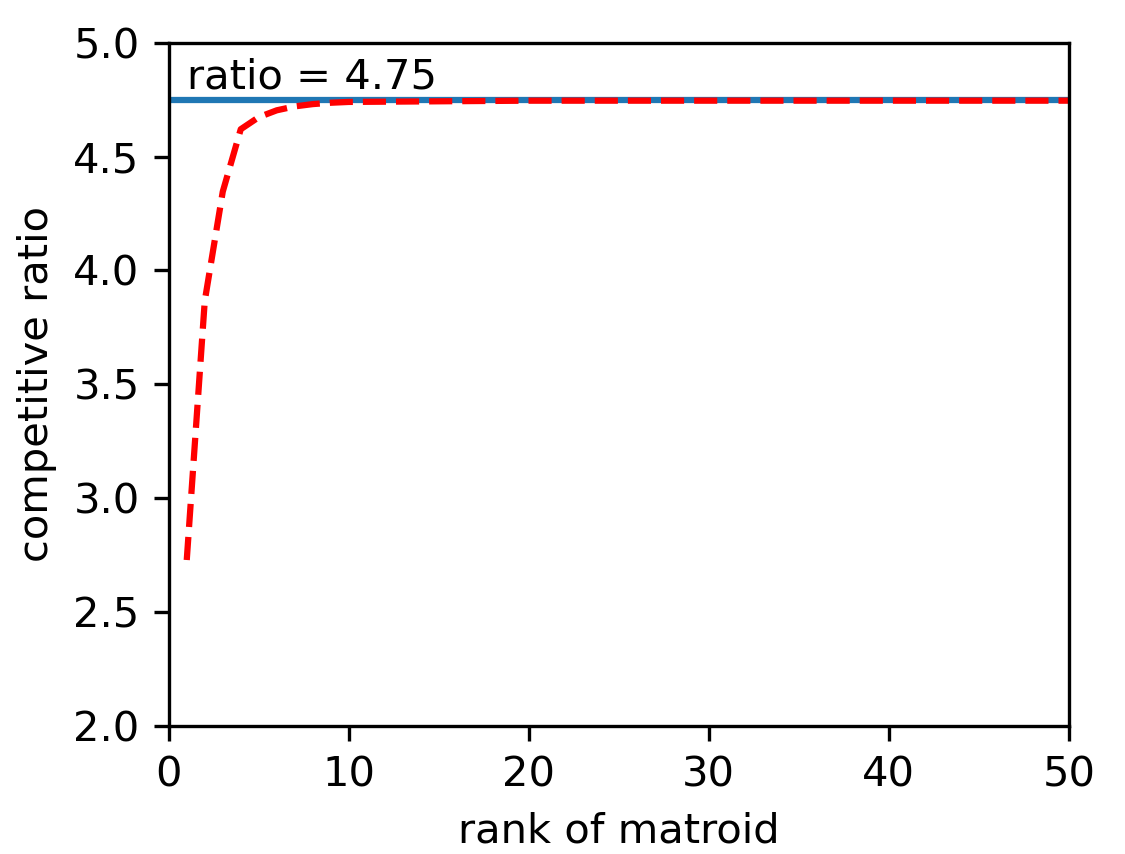}
\end{figure}

%%
%% Bibliography
%%

%% Please use bibtex, 

\bibliography{matroid-secretary}

\end{document}